\newtheorem{theorem}{Theorem}
\newtheorem{remark}{Remark}
\providecommand{\dontprintsemicolon}{\DontPrintSemicolon}
\providecommand{\linesnumberedhidden}{\LinesNumberedHidden}
\providecommand{\showln}{\ShowLn}
\title{Reasoning about Strategies under Partial Observability\\and Fairness Constraints}
\author{
	Simon Busard, Charles Pecheur\thanks{This work is supported by the European Fund for Regional Development and by the Walloon Region.}
	\institute{
		ICTEAM Institute,\\
		Universit\'e catholique de Louvain,\\
		Louvain-la-Neuve, Belgium
	}
	\email{simon.busard@uclouvain.be}
	\email{charles.pecheur@uclouvain.be}
	\and
	Hongyang Qu
	\institute{
		Dept. of Computer Science,\\
		University of Oxford,\\
		Oxford, United Kingdom
	}
	\email{Hongyang.Qu@cs.ox.ac.uk}
	\and
	Franco Raimondi
	\institute{
                Dept. of Computer Science,\\
		Middlesex University,\\
		London, United Kingdom
	}
	\email{f.raimondi@mdx.ac.uk}
}
\begin{document}
\maketitle

\vspace{-1em}
\begin{abstract}
A number of extensions exist for Alternating-time Temporal Logic; some of these mix strategies and partial observability but, to the best of our knowledge, no work provides a unified framework for strategies, partial observability and fairness constraints. In this paper we propose $ATLK^F_{po}$, a logic mixing strategies under partial observability and epistemic properties of agents in a system with fairness constraints on states, and we provide a model checking algorithm for it.
\end{abstract}
\vspace{-1em}

\section{Introduction}

A number of extensions exist for Alternating-time Temporal Logic; starting from~\cite{Hoek-Wooldridge-03}, partial observability has been investigated by many authors, see for instance~\cite{jamroga2007constructive} and references therein. But, to the best of
our knowledge, no work provides a unified framework for strategies,
partial observability and fairness constraints.
For example, Jamroga and van der Hoek
proposed, among other logics, ATOL, mixing partial observability with
strategies of agents~\cite{Jamroga-Hoek-04}. Along the same lines,
Schobbens studied ATL$_{ir}$\cite{Schobbens-04}, seen as the minimal
ATL-based logic for strategies under partial
observability~\cite{Jamroga-Dix-06}.
On the other hand, some efforts have been made on bringing fairness to
ATL. For instance the work of Alur et
al.~\cite{Alur-Henzinger-others-02}, or the work of Kl\"uppelholz and
Baier~\cite{Kluppelholz-Baier-08} introduce the notion of fairness
constraints on actions, asking for an infinitely often enabled action
to be taken infinitely often.
For temporal and epistemic logics, however, fairness
conditions are normally provided on \emph{states}.
Furthermore, it has been shown that (weak, strong or unconditional) fairness constraints on actions, can be reduced to (weak, strong or unconditional, respectively) fairness constraints on states (see \cite{Baier-Katoen-08}, for instance).
In this paper we
propose $ATLK^F_{po}$, a logic mixing strategies under partial
observability and epistemic properties of agents in a system with
unconditional fairness constraints \emph{on states}, and we provide a model checking
algorithm for it.

To motivate the need for fairness constraints in ATL under partial
observability, consider the simple card game example in
\cite{Jamroga-Hoek-04}. The game is played between a player and a
dealer. It uses three cards, $A$, $K$ and $Q$; $A$ wins over $K$, $K$
wins over $Q$ and $Q$ wins over $A$. First, the dealer gives one card
to the player, keeps one and leaves the last one on table. Then the
player can keep his card or swap it with the one on the table. The
player wins if his card wins over the dealer's card.  Under ATL$_{ir}$
semantics, the player cannot win the game: he cannot distinguish
between, for example, $<A,K>$ and $<A,Q>$ (where $<a,b>$ means "player
has card $a$, dealer has card $b$") and thus has to make the same
action in both states, with a different result in each case.  Consider
now a variation of this game: the game does not terminate after the
first round. Instead, if the player does not win, cards are
redistributed. In this case, too, the player cannot win the game: for
instance, he will have to choose between keeping or swapping cards in
$<A,K>$ and $<A,Q>$, so he won't be able to enforce a win because the
dealer (that chooses the given cards) can be unfair and always give
the losing pair.  But if we add one fairness constraint per
intermediate state---i.e. the states in which the player has to choose
between swapping or keeping---the player has a strategy to finally win
the game. In this case, we only consider paths along which all
fairness constraints are met infinitely often: this situation
corresponds to a fair dealer, giving the cards randomly. The player
can thus finally win because $<A,K>$ will eventually happen---even if
he cannot distinguish it from $<A,Q>$---, so he knows a strategy to
win at least a round: keeping his card.

Another example of application of fairness constraints in ATL is Multi-Agent Programs~\cite{Dastani-Jamroga-10}. These programs are composed of interleaved agent programs and fairness constraints are used to avoid unfair interleaving.
Dastani and Jamroga express fairness as formulae of the logic ATL*~\cite{Dastani-Jamroga-10}; in this paper, instead, we deal only with ATL and therefore fairness constraints cannot be expressed as formulae of the logic. The situation is similar to the case of LTL versus CTL model checking: in the first case model checking fairness is reduced to model checking a more complex formula using the same verification algorithms; in the second case fairness is incorporated into bespoke verification algorithms. In our work we chose ATL over ATL* because of complexity considerations (see Section~\ref{section:mc}).

The rest of the paper is structured as follows: Section~\ref{section:logics} presents the syntax and semantics of $ATLK^F_{po}$ and Section~\ref{section:mc} presents two model checking algorithms for the logic. Finally, Section~\ref{section:conclusion} summarizes the contribution and draws some future work.

\section{Syntax and Semantics}\label{section:logics}

This section presents the syntax and semantics of $ATLK^F_{po}$, an
extension of ATL with partial observability under fairness constraints
on states. An extension with full observability under the same fairness constraints $ATLK^F_{fo}$ is also presented because the model
checking algorithm for $ATLK^F_{po}$ relies on the one for
$ATLK^F_{fo}$.

\paragraph{Syntax} Both logics share the same syntax, composed of
the standard Boolean connectors ($\vee$, $\wedge$, $\neg$, etc.), CTL
operators ($EX$, $EU$, $EG$, etc.)~\cite{Clarke-Grumberg-others-99},
knowledge operators ($K_{ag}$, $E_\Gamma$, $D_\Gamma$,
$C_\Gamma$)~\cite{Fagin-Halpern-others-95} and strategic operators
($\langle\Gamma\rangle X$, $\langle\Gamma\rangle G$,
$\langle\Gamma\rangle U$, $\langle\Gamma\rangle W$ and their $[\Gamma]$
counterparts)~\cite{Alur-Henzinger-others-02}.

\paragraph{Models and notation} $ATLK^F_{fo}$ and $ATLK^F_{po}$
formulae are interpreted over models $M = \langle Ag, S, Act, T, I,$
$\{\sim_i\}, V, F \rangle$ where (1) $Ag$ is a set of $n$ agents; (2)
$S = S_1 \times ... \times S_n$ is a set of global states, each of
which is composed of $n$ local states, one for each agent; (3) $Act =
Act_1 \times ... \times Act_n$ is a set of joint actions, each of
which is composed of $n$ actions, one for each agent; (4) $T \subseteq
S \times Act \times S$ is a transition relation between states in $S$
and labelled with joint actions (we write $s\xrightarrow{a} s'$ if
$(s, a, s') \in T$); (5) $I \subseteq S$ is the a set of initial
states; (6) $\{\sim_i\}$ is a set of equivalence relations between
states, and $\sim_i$ partitions the set of states in terms of
knowledge of agent $i$---$s \sim_i s'$ iff $s_i = s'_i$, i.e two states are indistinguishable for agent $i$ if they share the same local state for $i$; (7) $V : S \rightarrow 2^{AP}$ labels states
with atomic propositions of $AP$; (8) $F \subseteq 2^S$ is a set of
fairness constraints, each of which is a subset of states.

A joint action $a = (a_1, ..., a_n)$ \textit{completes} a partially
joint action $a_\Gamma = (a'_i, ..., a'_j)$ composed of actions of
agents in $\Gamma\subseteq Ag$---written $a_\Gamma \sqsubseteq a$---if
actions in $a$ for agents in $\Gamma$ correspond to actions in
$a_\Gamma$.  Furthermore, we define the function $img : S \times Act
\rightarrow 2^S$ as $img(s, a) = \{s' \in S | s\xrightarrow{a}s'\}$,
i.e. $img(s,a)$ is the set of states reachable in one step from $s$
through $a$.

A model $M$ represents a non-deterministic system where each agent has
an imperfect information about the current global state. One
restriction is made on $T$: $\forall s, s' \in S, s \sim_i s' \implies
enabled(s, i) = enabled(s', i)$ where $enabled(s, i) = \{a_i \in
Act_i | \exists s' \in S, a \in Act \textrm{ s.t. } (a_i)
\sqsubseteq a \wedge s \xrightarrow{a} s'\}$.  This means that the
actions an agent can perform in two epistemically equivalent states
are the same. The $enabled$ function is straightforwardly extended to
groups of agents.

A \textit{path} in a model $M$ is a sequence $\pi = s_0
\xrightarrow{a_1} s_1 \xrightarrow{a_2}...$ of elements of $T$. We use
$\pi(d)$ for $s_d$. A state $s$ is \textit{reachable} in $M$ if there
exist a path $\pi$ and $d \geq 0$ such that $\pi(0) \in I$ and $\pi(d)
= s$.  A path $\pi$ is \textit{fair} according to a set of fairness
conditions $F = \{f_1, ..., f_k\}$ if for each fairness condition
$f$, there exist infinitely many positions $d \geq 0$ such that
$\pi(d) \in f$. A state $s$ is \textit{fair} if there exists a fair
path starting at $s$.

A \textit{strategy} for agent $i$ is a function $f_i:S \rightarrow
Act_i$ where, for any state $s$, $f_i(s) \in enabled(s, i)$; a
strategy maps each state to an enabled action. We call these
strategies \textit{global strategies}. A \textit{uniform strategy} for
agent $i$ is a global strategy $f_i$ where $\forall s, s' \in S, s'
\sim_i s \implies f_i(s) = f_i(s')$, i.e. agent $i$ cannot choose two
different actions for two indistinguishable states.  The
\textit{strategy outcomes} from a state $s$ for a strategy $f_i$,
denoted with $out(s, f_i)$, is the set of paths a strategy can
enforce, i.e. $out(s, f_i) = \{\pi = s_0 \xrightarrow{a_1} s_1 ... |
s_0 = s \wedge \forall d \geq 0, s_{d+1} \in img(s_d, a_{d+1}) \wedge
(f_i(s_d)) \sqsubseteq a_{d+1}\}$. The definition of outcomes is
naturally extended to sets of strategies for a subset of agents.

\paragraph{Semantics} The semantics of both logics are defined
over states of a model $M$ by defining the relations $M, s
\models^F_{fo} \phi$ and $M, s \models^F_{po} \phi$, for $ATLK^F_{fo}$
and $ATLK^F_{po}$, respectively. $M$ can be omitted when clear from the
context. Both relations share a part of their semantics; we write $s \models^F \phi$ if $s \models^F_{fo} \phi$ and $s \models^F_{po} \phi$.  The $s \models^F_{fo} \phi$ and $s
\models^F_{po} \phi$ relations are recursively defined over the
structure of $\phi$ and follow the standard interpretation for most of
the operators. $s\models^F p$ if $p \in V(s)$; $\vee$ and $\neg$ are
interpreted in the natural way.  $s \models^F K_i \phi$ if $\phi$
is true in all fair reachable states indistinguishable from $s$ for agent $i$, $s
\models^F E_\Gamma \phi$ if all agents in $\Gamma$ know $\phi$,
$s\models^F D_\Gamma \phi$ if, by putting all their knowledge in
common, agents of $\Gamma$ would know $\phi$, and $s \models^F
C_\Gamma \phi$ if $\phi$ is common knowledge among agents of
$\Gamma$~\cite{Fagin-Halpern-others-95}.  $s \models^F E \psi$ if
there is a path $\pi$ starting at $s$ satisfying $\psi$, $\pi
\models^F X \phi$ if $\pi(1)$ satisfies $\phi$, $\pi \models^F \phi_1
U \phi_2$ if $\phi_1$ is true along the path until $\phi_2$ is true, $\pi \models G \phi$ if $\phi$ is always true along
$\pi$, and $\pi \models \phi_1 W \phi_2$ if $\pi \models (\phi_1 U \phi_2) \vee G \phi_1$~\cite{Clarke-Grumberg-others-99}.

The meaning of the $\langle\Gamma\rangle$ operator is different in the
two semantics:

\noindent (i) $s \models^F_{fo} \langle \Gamma \rangle \psi$ iff there exists a set of \textbf{global strategies}
$f_\Gamma$, one for each agent in $\Gamma$, such that for all
\textbf{fair paths} $\pi \in out(s, f_\Gamma), \pi \models^F \psi$;

\noindent (ii) $s \models^F_{po} \langle \Gamma \rangle \psi$ iff there exists a set of \textbf{uniform strategies}
$f_\Gamma$, one for each agent in $\Gamma$, such that for all $s' \sim_\Gamma s$, for all
\textbf{fair paths} $\pi \in out(s', f_\Gamma), \pi \models^F
\psi$.
    
The $[\Gamma]$ operator is the dual of $\langle\Gamma\rangle$: $s\models^F [\Gamma] \psi$ iff $s\models^F \neg \langle\Gamma\rangle \neg \psi$.

\section{Model Checking $ATLK^F_{fo}$ and $ATLK^F_{po}$}\label{section:mc}

\paragraph{Model checking $ATLK^F_{fo}$}
The model checking algorithm for $ATLK^F_{fo}$ is defined by the function $\llbracket
. \rrbracket^F_{fo} : ATLK^F_{fo} \rightarrow 2^S$ returning the set
of states of a given model $M$ satisfying a given $ATLK^F_{fo}$
property. This function is defined in the standard way for Boolean
connectors, CTL and knowledge
operators~\cite{Clarke-Grumberg-others-99,Lomuscio-Penczek-07}.
The $[\Gamma]$ operators are evaluated as follows:
\begin{align*}
	\llbracket [\Gamma] X \phi \rrbracket^F_{fo} &= Pre_{[\Gamma]}(\llbracket \phi \rrbracket^F_{fo} \cap Fair_{[\Gamma]}) \\
	\llbracket [\Gamma] \phi_1 U \phi_2 \rrbracket^F_{fo} &= \mu Z . (\llbracket \phi_2 \rrbracket^F_{fo} \cap Fair_{[\Gamma]}) \cup (\llbracket \phi_1 \rrbracket^F_{fo} \cap Pre_{[\Gamma]}(Z)) \\
	\llbracket [\Gamma] G \phi \rrbracket^F_{fo} &= \nu Z . \llbracket \phi \rrbracket^F_{fo} \cap \bigcap_{f \in F} Pre_{[\Gamma]}(\mu Y . (Z \cap f) \cup (\llbracket \phi \rrbracket^F_{fo} \cap Pre_{[\Gamma]}(Y))) \\
	\llbracket [\Gamma] \phi_1 W \phi_2 \rrbracket^F_{fo} &= \begin{array}{l}\nu Z . (\llbracket\phi_2\rrbracket^F_{fo} \cap Fair_{[\Gamma]}) \\\phantom{\nu Z . } \cup (\llbracket\phi_1\rrbracket^F_{fo} \cap \bigcap_{f \in F} Pre_{[\Gamma]}(\mu Y . (\llbracket\phi_2\rrbracket^F_{fo}\cap Fair_{[\Gamma]}) \cup (Z \cap f) \cup (\llbracket \phi_1 \rrbracket^F_{fo} \cap Pre_{[\Gamma]}(Y))))\end{array}
\end{align*}
where $Pre_{[\Gamma]}(Z)$ = $\{s | \forall a_\Gamma \in enabled(s,
\Gamma), \exists a \textrm{ s.t. } a_\Gamma \sqsubseteq a \wedge
img(s, a) \cap Z \neq \emptyset \}$ and $Fair_{[\Gamma]}$ =
$\llbracket [\Gamma] G~true \rrbracket^F_{fo}$. $\mu Z.\tau(Z)$ and $\nu Z.\tau(Z)$ are the least and greatest fix points of function $\tau(Z)$.
Intuitively, the $Pre_{[\Gamma]}(Z)$ operator returns the set of
states in which $\Gamma$ cannot avoid to reach a state of $Z$. Thus,
$\llbracket [\Gamma] G \phi \rrbracket^F_{fo}$ returns the set of
states in which $\Gamma$ cannot avoid a path of states of $\llbracket
\phi \rrbracket^F_{fo}$ going through all fairness constraints
infinitely often; $Fair_{[\Gamma]}$ is the set of states in which
$\Gamma$ cannot avoid a fair path.
Note that the $\langle \Gamma\rangle$ operators can be computed using
the $[\Gamma]$ and $\neg$ operators, but can also be computed directly
using the dual forms from the ones above. For example $\llbracket
\langle\Gamma\rangle G \phi \rrbracket^F_{fo}$ = $\nu Z . (\llbracket
\phi \rrbracket^F_{fo} \cup \overline{Fair_{[\Gamma]}}) \cap
Pre_{\langle\Gamma\rangle}(Z)$, where $Pre_{\langle\Gamma\rangle}(Z) = \overline{Pre_{[\Gamma]}(\overline{Z})} =
\{s | \exists a_\Gamma \in enabled(s, \Gamma) \textrm{ such that }
\forall a, a_\Gamma \sqsubseteq a \implies img(s, a) \subseteq Z\}$. $\overline{Z} \subseteq S$ is the complement of the set $Z \subseteq S$.

The correctness of the model checking algorithm for $ATLK^F_{fo}$ 
follows from Theorem~\ref{thm:correct-mc-weak}.
\begin{theorem}\label{thm:correct-mc-weak}
	For all states $s \in S$, $s \models^F_{fo} \phi$ if and only if $s \in \llbracket \phi \rrbracket^F_{fo}$.
\end{theorem}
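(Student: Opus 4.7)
The plan is to proceed by structural induction on $\phi$. The atomic, Boolean, CTL, and epistemic cases reduce to the standard fair CTL and epistemic model-checking characterisations cited in the paper; for $K_i$ one restricts $\sim_i$ to fair reachable states so as to match the semantics. A point that must be settled early is that $Fair_{[\Gamma]}$, defined as $\llbracket [\Gamma]G~true \rrbracket^F_{fo}$, genuinely equals the set of states from which no $\Gamma$-strategy can prevent a fair path; this is the $\phi = true$ instance of the $[\Gamma]G$ argument below, and since the other strategic clauses refer to it, it has to be handled first within that case.

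For the strategic cases I would unfold $[\Gamma]\psi$ via $\neg\langle\Gamma\rangle\neg\psi$ into the concrete statement ``for every $\Gamma$-strategy $f_\Gamma$, some fair $\pi \in out(s,f_\Gamma)$ satisfies $\psi$'', and then match the fixed-point definition set-theoretically. For $[\Gamma]X\phi$, the right-to-left direction stitches an arbitrary $f_\Gamma$ at $s$ with the one-step witness produced by membership in $Pre_{[\Gamma]}(\llbracket\phi\rrbracket^F_{fo} \cap Fair_{[\Gamma]})$ and then extends through the successor $s'$ using the fair continuation guaranteed by $s' \in Fair_{[\Gamma]}$; the left-to-right direction rewrites any candidate $f_\Gamma$ at $s$ to play each $a_\Gamma \in enabled(s,\Gamma)$ in turn and reads off the required $\phi$-successor in $Fair_{[\Gamma]}$ from the resulting fair witness. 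The $[\Gamma]\phi_1 U \phi_2$ case is a routine least-fixed-point argument, identifying stage $n$ of the iteration with states from which the opposition can drive any $\Gamma$-strategy onto a fair witness reaching $\llbracket\phi_2\rrbracket^F_{fo} \cap Fair_{[\Gamma]}$ within $n$ steps through $\phi_1$-states.

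The main obstacle is $[\Gamma]G\phi$; the $[\Gamma]W$ case is then a hybrid of this and the $U$ argument. For the $\nu Z$ characterisation I would first prove by induction on the $\mu Y$ iterations that the inner fixed point for fairness class $f$ equals the set of states from which the opposition can drive every $\Gamma$-strategy along a finite $\phi$-labelled prefix into $Z \cap f$. The right-to-left direction then constructs an infinite fair adversarial witness against a fixed $f_\Gamma$ by cyclically threading $|F|$ such finite prefixes through the fairness classes and re-entering $Z$ after each segment. The left-to-right direction shows that $\{s \mid s \models^F_{fo} [\Gamma]G\phi\}$ is a postfixpoint of the outer functional, hence contained in $\nu Z$ by Knaster--Tarski. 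The genuinely delicate step, where I expect the most care, is the infinite concatenation: the finite witnesses must glue into a single path coherent with one pre-chosen global strategy of $\Gamma$ throughout, which works precisely because strategies depend on states rather than histories and thus replay identically whenever a state is revisited.
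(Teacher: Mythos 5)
Your proposal follows essentially the same route as the paper's proof sketch: you characterise the inner least fixpoint as forced finite reachability through $\phi$-states (the paper's $Reach_{[\Gamma]}$), build the $[\Gamma]G$ case on top of it with $Fair_{[\Gamma]}$ obtained as the $true$ instance, dispatch $[\Gamma]X$ and $[\Gamma]U$ directly, treat $[\Gamma]W$ as a hybrid, and recover $\langle\Gamma\rangle$ by duality. The only difference is presentational: where the paper says ``induction over the computation of the fix points,'' you spell this out as a Knaster--Tarski postfixpoint argument in one direction and an explicit stitching of finite adversarial segments (valid because strategies are memoryless) in the other, which is a finer-grained rendering of the same argument.
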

\begin{proof}[Proof sketch]
First, $Reach_{[\Gamma]}(P_1, P_2) = \mu Y . P_2 \cup (P_1
\cap Pre_{[\Gamma]}(Y))$ computes the set of states in which
$\Gamma$ cannot avoid a finite path of states of $P_1$ to a state of
$P_2$. We can prove it by induction over the computation of the
least fix point. It is true by definition of the least fix point and
the $Pre_{[\Gamma]}$ operation.

Then, for the $[\Gamma] G \phi$ operator, $\llbracket [\Gamma] G \phi
\rrbracket^F_{fo}$ $=$ $\nu Z . \llbracket \phi \rrbracket^F_{fo} \cap
\bigcap_{f \in F} Pre_{[\Gamma]}(\mu Y . (Z \cap f) \cup
(\llbracket \phi \rrbracket^F_{fo} \cap Pre_{[\Gamma]}(Y)))$ $=$ $\nu Z . \llbracket \phi \rrbracket^F_{fo} \cap
\bigcap_{f \in F} Pre_{[\Gamma]}(Reach_{[\Gamma]}(\llbracket \phi \rrbracket^F_{fo}, Z \cap f))$
computes the set of states in which $\Gamma$ cannot avoid a fair
path (i.e. going through each $f \in F$ infinitely often)
that satisfies $G \phi$. We prove it by induction over the
computation of the greatest fix point and by using what has been
proved just above.
	
Thanks to this, we can easily prove that $Fair_{[\Gamma]} =
\llbracket [\Gamma] G true \rrbracket^F_{fo}$ computes the set of
states in which $\Gamma$ cannot avoid a fair path (it is just a
particular case of the $[\Gamma] G$ operator).
	
Then, $[\Gamma]X$ and $[\Gamma] U$
operators compute the set of states in which
$\Gamma$ cannot avoid a successor in
$\llbracket\phi\rrbracket^F_{fo}$ in which $\Gamma$ cannot
avoid a fair path, respectively in which $\Gamma$ cannot avoid a
finite path through states of $\llbracket\phi_1\rrbracket^F_{fo}$ to
a state of $\llbracket\phi_2\rrbracket^F_{fo}$, in which $\Gamma$
cannot avoid a fair path. In particular, the proof for $[\Gamma] U$ directly follows from the proof for $Reach_{[\Gamma]}$.
	
Finally, the proof for the $[\Gamma]W$ operator is similar to the
one for $[\Gamma]G$ operator.
The proof of correctness of the algorithms for $\langle
\Gamma \rangle$ operators follows from the proof for $[\Gamma]$
operators, the duality of these operators and standard fix point
properties.
\end{proof}

\paragraph{Model checking $ATLK^F_{po}$ -- basic algorithm}
A basic algorithm is presented in Algorithm~\ref{algo:eval<>}.  It
relies on the model checking algorithm for $ATLK^F_{fo}$.  It uses two
sub-algorithms: $Split$ and $\llbracket . \rrbracket^F_{fo}|_{strat}$,
where $strat$ is a strategy represented as a set of state/action
pairs. The latter is a modified version of the algorithm described in
the previous section with $Pre_{\langle\Gamma\rangle}|_{strat}$
replacing $Pre_{\langle\Gamma\rangle}$ where
$Pre_{\langle\Gamma\rangle}|_{strat}(Z) = \{s | \exists a_\Gamma \in
enabled(s, \Gamma) \textrm{ such that } \langle s, a_\Gamma \rangle
\in strat \wedge \forall a, a_\Gamma \sqsubseteq a \implies img(s, a)
\subseteq Z\}$,
i.e., $Pre_{\langle\Gamma\rangle}|_{strat}(Z)$ is
$Pre_{\langle\Gamma\rangle}(Z)$ restricted to states and actions
allowed by $strat$. Furthermore, $\llbracket
. \rrbracket^F_{fo}|_{strat}$ recursively calls $\llbracket
. \rrbracket^F_{po}$ on sub-formulae, instead of $\llbracket
. \rrbracket^F_{fo}$.

\begin{algorithm}[h!t]
	\dontprintsemicolon
	\linesnumberedhidden
	\KwData{$M$ a given (implicit) model, $\Gamma$ a subset of agents of $M$, $\psi$ an $ATLK^F_{po}$ path formula.}
	\KwResult{The set of states of $M$ satisfying $\langle \Gamma \rangle \psi$.}
	\BlankLine
	$sat = \{\}$\;
	\For{$strat \in Split(S \times Act_\Gamma)$}{
		$winning =  \llbracket \langle \Gamma \rangle \psi\rrbracket^F_{fo}|_{strat}$\;
		$sat = sat \cup \{s \in winning | \forall s' \sim_\Gamma s, s' \in winning\}$\;
	}
	\Return $sat$
	\caption{$\llbracket \langle \Gamma \rangle \psi\rrbracket^F_{po}$}
	\label{algo:eval<>}
\end{algorithm}

The $Split$ algorithm is given in Algorithm~\ref{algo:split}. $Split(S
\times Act_\Gamma)$ returns the set of uniform strategies of the system
(a uniform strategy is represented by the action for group $\Gamma$ allowed in each
state, and this action needs to be the same for each state
in the same equivalence class).

\begin{algorithm}[h!t]
	\dontprintsemicolon
	\linesnumberedhidden
	\KwData{$Strats \subseteq S \times Act_\Gamma$.}
	\KwResult{The set of all the largest subsets $SA$ of $Strats \subseteq S \times Act_\Gamma$ such that no conflicts appear in $SA$.}
	\BlankLine
	$C = \{ \langle s, a_\Gamma \rangle \in Strats | \exists \langle s', a'_\Gamma \rangle \in Strats ~s.t.~ s' \sim_\Gamma s \wedge a_\Gamma \neq a'_\Gamma \}$\;
	\lIf{$C = \emptyset$}{
		\Return $\{Strats\}$\;
	}
	\Else{
		$\langle s, a_\Gamma \rangle = $ pick one in $C$\;
		$E = \{ \langle s', a'_\Gamma \rangle \in Strats | s' \sim_\Gamma s \}$\;
		$A = \{ a_\Gamma \in Act_\Gamma | \exists \langle s, a_\Gamma \rangle \in E \}$\;
		$strats = \{\}$\;
		\For{$a_\Gamma \in A$}{
			$S = \{ \langle s', a_\Gamma \rangle \in E  | a'_\Gamma = a_\Gamma \}$\;
			$strats = strats \cup Split(S \cup (Strats \backslash E))$
		}
		\Return $strats$
	}
	\caption{$Split(Strats)$}
	\label{algo:split}
\end{algorithm}

Intuitively, Algorithm~\ref{algo:eval<>} computes, for each possible uniform strategy $strat$, the set of states for which the strategy is winning, and then keeps only the states $s$ for which the strategy is winning for all states equivalent to $s$.

Before proving the correctness of the basic algorithm, let's prove the correctness of the $Split$ algorithm.

\begin{theorem}\label{th:split}
$Split(Strats)$ computes the set of all the largest subsets $SA$ of $Strats \subseteq S\times Act_\Gamma$ such that no conflicts appear in $SA$.
\end{theorem}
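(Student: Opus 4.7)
The plan is to prove the statement by strong induction on the \emph{conflict count} of $Strats$, which I define as the number of $\sim_\Gamma$-equivalence classes $[s]$ such that pairs of $Strats$ with first component in $[s]$ employ at least two distinct actions. In the base case the conflict count is zero, so $C = \emptyset$ and the algorithm returns $\{Strats\}$; since $Strats$ is already conflict-free, it is trivially its own unique maximal-by-inclusion conflict-free subset, and there is nothing more to check.

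For the inductive step, the algorithm picks a conflict pair $\langle s, a_\Gamma\rangle \in C$, gathers all pairs $E \subseteq Strats$ whose state lies in $[s]$, and forms the set $A$ of actions appearing in $E$. For each $a_\Gamma \in A$ it sets $S' = \{\langle s', a_\Gamma \rangle \in E\}$ and recurses on $Strats' = S' \cup (Strats \setminus E)$. First I would observe that $Strats'$ has strictly smaller conflict count, since the class $[s]$ is now resolved and no new conflicts are introduced elsewhere, so the inductive hypothesis applies to each recursive call.

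The heart of the argument is to show that, for a fixed $a_\Gamma \in A$, the maximal conflict-free subsets of $Strats'$ coincide exactly with the maximal conflict-free subsets of $Strats$ that use $a_\Gamma$ on the class $[s]$. For soundness ($\subseteq$), any such maximal $SA \subseteq Strats'$ is a conflict-free subset of $Strats$; should some $\langle t, b_\Gamma\rangle \in Strats \setminus SA$ be addable without conflict, then either $t \notin [s]$, so $\langle t, b_\Gamma\rangle \in Strats \setminus E \subseteq Strats'$, or $t \in [s]$, in which case avoiding conflict with the $a_\Gamma$-pairs of $SA$ forces $b_\Gamma = a_\Gamma$ and so $\langle t, b_\Gamma\rangle \in S' \subseteq Strats'$; either way we contradict the maximality of $SA$ in $Strats'$. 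For completeness ($\supseteq$), any maximal conflict-free $SA \subseteq Strats$ using $a_\Gamma$ on $[s]$ excludes every pair of $E \setminus S'$ (to avoid conflict) and includes every pair of $S'$ (by maximality), hence $SA \subseteq Strats'$, and maximality plainly transfers to the smaller ambient set.

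The proof concludes by noting that every maximal conflict-free subset of $Strats$ must use some action from $A$ on the class $[s]$: maximality forbids omitting any pair whose first component lies in $[s]$, and every action appearing for such a pair in $Strats$ belongs to $A$. Taking the union over $a_\Gamma \in A$ of the recursive outputs therefore yields exactly the desired collection, matching the return value of the algorithm. The main subtlety I expect is the two-directional maximality transfer between $Strats$ and $Strats'$; once that is isolated as above, the rest is bookkeeping on equivalence classes and termination of the recursion via the decreasing conflict count.
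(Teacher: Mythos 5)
Your proof is correct and follows essentially the same route as the paper's: induction on the number of conflicting $\sim_\Gamma$-equivalence classes, with a trivial base case and an inductive step that splits the chosen conflicting class by action. The only difference is presentational: you apply the induction hypothesis directly to the set $S' \cup (Strats \setminus E)$ on which the algorithm actually recurses and spell out the two-way maximality transfer, whereas the paper's sketch phrases the same step as a ``cartesian product'' of the per-class choices with the maximal non-conflicting subsets of $Strats \setminus E$ --- a tightening of the same argument rather than a different one.
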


\begin{remark}
A conflict appears in $SA\subseteq S\times Act_\Gamma$ if there exist two elements $\langle s, a_\Gamma\rangle$ and $\langle s', a'_\Gamma \rangle$ in $SA$ such that $s' \sim_\Gamma s$ and $a_\Gamma \neq a'_\Gamma$, i.e. there is a conflict if $SA$ proposes two different actions in two equivalent states.
\end{remark}

\begin{proof}[Proof sketch of Theorem~\ref{th:split}]
$Split$ gets all the conflicting elements of $Strats$. If there are no such elements, then $Strats$ is its own largest non-conflicting subset; otherwise, $Split$ takes one conflicting equivalence class $E$ and, for each of its largest non-conflicting subsets $S$---i.e. subsets of states using the same action---it calls $Split$ on the rest of $Strats$ augmented with the non-conflicting subset $S$.

We can prove the correctness of $Split$ by induction over the number of conflicting equivalence classes of $Strats$. If $Strats$ does not contain any conflicting equivalence classes, $Strats$ is its own single largest subset in which no conflicts appear.
Otherwise, let's assume that $Split(Starts \backslash E)$ with $E$ a conflicting equivalence class of $Strats$ returns the set of all the largest non-conflicting subsets of $Strats \backslash E$; then, by what has been explained above, $Split$ returns the cartesian product between all the largest non-conflicting subsets of $E$ and all the largest non-conflicting subsets of $Strats \backslash E$. Because these cannot be conflicting as they belong to different equivalence classes, we can conclude that $Split$ returns the set of the largest non-conflicting subsets of $Strats$.
\end{proof}

%\newpage

The correctness of Algorithm~\ref{algo:eval<>} is then given by the
following theorem.

\begin{theorem}
$\llbracket \langle \Gamma \rangle \psi\rrbracket^F_{po}$ computes the set of states of $M$ satisfying $\langle \Gamma \rangle \psi$, i.e. 
\begin{equation*}
\forall s \in S, s \in \llbracket \langle \Gamma \rangle \psi\rrbracket^F_{po} \textrm{ iff } s \models^F_{po} \langle \Gamma \rangle \psi.
\end{equation*}
\end{theorem}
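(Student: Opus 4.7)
The plan is to prove the equivalence by structural induction on the formula, with the inductive hypothesis supplying correctness of $\llbracket.\rrbracket^F_{po}$ on every proper state subformula of $\langle\Gamma\rangle\psi$. Since $\llbracket.\rrbracket^F_{fo}|_{strat}$ delegates all strict state subformulae back to $\llbracket.\rrbracket^F_{po}$, it only remains to check that the outer $\langle\Gamma\rangle$ layer is handled correctly by Algorithm~\ref{algo:eval<>}, taking the labellings of the subformulae as given.

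The central ingredient is a strategy-restricted analogue of Theorem~\ref{thm:correct-mc-weak}: for any non-conflicting $strat \subseteq S \times Act_\Gamma$, a state $s$ belongs to $\llbracket\langle\Gamma\rangle\psi\rrbracket^F_{fo}|_{strat}$ exactly when the strategy induced by $strat$ (extended arbitrarily outside its domain) has all its fair outcomes from $s$ satisfying $\psi$. The proof is a line-by-line transfer of the fix-point reasoning sketched for Theorem~\ref{thm:correct-mc-weak}, with $Pre_{\langle\Gamma\rangle}|_{strat}$ in place of $Pre_{\langle\Gamma\rangle}$. Two facts that drop out of the conflict definition and Theorem~\ref{th:split} make this lemma clean: a non-conflicting $strat$ pins down at most one action per state (apply the definition with $s' = s$), and every largest non-conflicting subset of $S \times Act_\Gamma$ is already a total uniform strategy, since a missing pair $\langle s, a\rangle$ could always be inserted without introducing a conflict unless an equivalent $s'$ already fixes a specific action in $strat$, in which case that same action extends non-conflictingly to $s$.

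For the ($\Leftarrow$) direction, suppose $s \models^F_{po} \langle\Gamma\rangle\psi$ as witnessed by a uniform strategy $f_\Gamma$. Its graph is non-conflicting, so by Theorem~\ref{th:split} it is contained in some $strat$ enumerated by $Split(S \times Act_\Gamma)$. The restricted lemma then places every $s' \sim_\Gamma s$ in $winning$, and so $s$ is added to $sat$ during that iteration. Conversely, if $s$ is added to $sat$ while processing some $strat$, then $strat$ is non-conflicting and determines a uniform strategy $f_\Gamma$; applying the restricted lemma at every $s' \sim_\Gamma s$ shows that all fair outcomes of $f_\Gamma$ from any such $s'$ satisfy $\psi$, which gives $s \models^F_{po} \langle\Gamma\rangle\psi$.

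The main obstacle is the restricted correctness lemma itself. The fix-point arguments are a direct adaptation of those for Theorem~\ref{thm:correct-mc-weak}, but some care is needed to show that the action choices witnessed in the successive $\mu$-inside-$\nu$ iterations for $[\Gamma]G$ and $[\Gamma]W$ cohere into a single global strategy. This is precisely where the partial-function property of $strat$ pays off: the admissible action at each state is already uniquely determined by $strat$, so no separate inter-iteration coherence argument is required, and the resulting strategy is automatically uniform by the non-conflict property guaranteed by Theorem~\ref{th:split}.
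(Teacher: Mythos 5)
Your proposal is correct and follows essentially the same route as the paper's own proof: enumerate all uniform strategies via $Split$ (Theorem~\ref{th:split}), use the $strat$-restricted $ATLK^F_{fo}$ algorithm to compute the per-strategy winning set, and keep only those states whose whole $\sim_\Gamma$-class is winning, which matches the $\models^F_{po}$ semantics. Your explicitly stated strategy-restricted analogue of Theorem~\ref{thm:correct-mc-weak} is precisely the step the paper asserts informally (``$winning$ is the set of states for which $strat$ itself is winning''), so you have merely made the paper's argument more structured, not different.
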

\begin{proof}[Proof sketch]
  First, $Split(S \times Act_\Gamma)$ returns all the
possible uniform strategies of the system, where a uniform strategy is
represented by the only action allowed in each equivalence class
of states---states equivalent in terms of the knowledge of
$\Gamma$---, this action being the same for every state of the class.

Indeed, the set of the largest non-conflicting subsets of $S \times
Act_\Gamma$ is the set of possible uniform strategies. A non-conflicting
subset of $S \times Act_\Gamma$ provides at most one action for each
equivalence class of states, otherwise it would not be
non-conflicting; second, a largest non-conflicting subset of $S
\times Act_\Gamma$ provides exactly one action for each equivalence
class of states, otherwise there would be a larger subset giving one
action for the missing equivalence classes and this subset would not
be conflicting.
Finally, a largest non-conflicting subset of $S
\times Act_\Gamma$ is a uniform strategy because it is exactly the
definition of a uniform strategy: giving one possible action for each
equivalence class. This thus ends the proof that $Split$ returns the
set of all possible uniform strategies.

  Second, $winning = \llbracket \Gamma \rrbracket \psi\rrbracket^F_{fo}
  \psi|_{strat}$ returns the set of states for which the strategy $strat$ is
  winning. Indeed, it uses $ATLK^F_{fo}$ model checking algorithm, restricted to actions in $strat$. It thus returns the set of states for which there is a (global) winning strategy in $strat$. As $strat$ is, by construction, a uniform strategy, $winning$ is the set of states for which there exists a uniform winning strategy---in fact, it is $strat$ itself.
  
  Finally, the set $\{s \in winning | \forall s' \sim_\Gamma s, s' \in winning\}$ is the set of states $s$ for which $strat$ is a winning strategy for all $s' \sim_\Gamma s$. $sat$ thus accumulates all the states $s$ for which there is a winning strategy for all states indistinguishable from $s$. As this is exactly the semantics of the property, i.e. $sat$ is exactly the set of states of the system satisfying the property, the proof is done.
\end{proof}

\paragraph{Improving the basic algorithm}
The first improvement proposed for the basic algorithm is the
pre-filtering of states to the ones satisfying the property under
$ATLK^F_{fo}$ ; we can filter them because if a state $s$ does not
satisfy $\langle \Gamma\rangle \psi$ under $ATLK^F_{fo}$, $s$ cannot
satisfy $\langle\Gamma\rangle \psi$ under $ATLK^F_{po}$. The second
one is the alternation between filtering and splitting the
strategies. Both improvements are aimed at reducing the number of uniform
strategies to consider. The improved algorithm is presented in
Algorithm~\ref{algo:eval<>improved}. Using this algorithm, we can
compute $\llbracket \langle \Gamma \rangle \psi\rrbracket^F_{po}$ as
$Improved\llbracket \langle \Gamma \rangle \psi\rrbracket^F_{po}|_{S \times Act_\Gamma}$. The intuition behind
Algorithm~\ref{algo:eval<>improved} is to start by computing the set
of states satisfying the property and the associated actions
(line~\ref{line:eval<>improved:eval}), then get all conflicts
(line~\ref{line:eval<>improved:conflict}) and, if there are conflicts,
choose one conflicting equivalence class of states and possible
actions (lines \ref{line:eval<>improved:pick} to \ref{line:eval<>improved:actions})
 and for each possible action
$a_\Gamma$, recursively call the algorithm with the strategies
following $a_\Gamma$ (lines \ref{line:eval<>improved:strat} and
\ref{line:eval<>improved:call})---i.e. split the class into uniform
strategies for this class and recursively call the algorithm on each
strategy.

\begin{algorithm}[h!t]
	\dontprintsemicolon
	\linesnumberedhidden
	\KwData{$M$ a given (implicit) model, $\Gamma$ a subset of agents of $M$, $\psi$ an $ATLK^F_{po}$ path formula, $Strats \subseteq S \times Act_\Gamma$.}
	\KwResult{The set of states of $M$ satisfying $\langle \Gamma \rangle \psi$ in $Strats$.}
	\BlankLine
	\showln{\label{line:eval<>improved:eval}}$Z = \llbracket \langle \Gamma \rangle \psi\rrbracket^{F,ac}_{fo}|_{Strats}$\;
	\showln{\label{line:eval<>improved:conflict}}$C = \{ \langle s, a_\Gamma \rangle \in Z | \exists \langle s', a'_\Gamma \rangle \in Z \textrm{ such that } s \sim_\Gamma s' \wedge a_\Gamma \neq a'_\Gamma \}$\;
	\If{$C = \emptyset$}{
		\showln{\label{line:eval<>improved:return}}\Return $\{s \in S | \exists a_\Gamma
\in Act_\Gamma \textrm{ s.t. } \forall s'\sim_\Gamma s, \langle
s', a_\Gamma \rangle \in Z\}$\;
	}
	\Else{
		\showln{\label{line:eval<>improved:pick}}$\langle s, a_\Gamma \rangle = $ pick one in $C$\;
		\showln{}$E = \{ \langle s', a'_\Gamma \rangle \in Z | s \sim_\Gamma s'\}$\;
		\showln{\label{line:eval<>improved:actions}}$A = \{ a_\Gamma \in Act_\Gamma | \exists \langle s, a_\Gamma \rangle \in E\}$\;
		$sat = \{\}$\;
		\For{$a_\Gamma \in A$}{
			\showln{\label{line:eval<>improved:strat}}$strat = \{ \langle s', a'_\Gamma \rangle \in E | a'_\Gamma = a_\Gamma \} \cup (Z \backslash E)$\;
			\showln{\label{line:eval<>improved:call}}$sat = sat \cup Improved\llbracket \langle \Gamma \rangle \psi\rrbracket^F_{po}|_{strat}$
		}
		\Return $sat$
	}
	\caption{$Improved\llbracket \langle \Gamma \rangle \psi\rrbracket^F_{po}|_{Strats}$}
	\label{algo:eval<>improved}
\end{algorithm}

More in detail, Algorithm~\ref{algo:eval<>improved} returns the set of
states satisfying the property in $Strats$. So, to get the final result,
we have to take all the states satisfying the property in $S\times Act_\Gamma$.
Algorithm~\ref{algo:eval<>improved} uses the
function $\llbracket . \rrbracket^{F,ac}_{fo}|_{strats}$. This function is a modification
of the $\llbracket . \rrbracket^F_{fo}|_{strats}$ function where
actions are linked to states. More precisely, every sub-call to
$\llbracket . \rrbracket^F_{po}$ or $\overline{Fair_{[\Gamma]}}$ is
enclosed by $StatesActions_\Gamma|_{strats}$ to get all enabled actions in these
states, restricted to $strats$---$StatesActions_\Gamma|_{strats}(Z)$ = $\{\langle s, a_\Gamma \rangle \in  strats
| s \in Z \wedge a_\Gamma \in enabled(s, \Gamma)\}$---, and $Pre_{\langle\Gamma\rangle}|_{strats}$ is replaced by
$Pre^{ac}_{\langle\Gamma\rangle}|_{strats}(Z)$ = $\{ \langle s, a_\Gamma \rangle \in
strats | a_\Gamma \in enabled(s, \Gamma) \wedge \forall a, a_\Gamma
\sqsubseteq a \implies img(s, a) \subseteq Z\}$.
For example, $\llbracket [\Gamma] G \phi
\rrbracket^{F,ac}_{fo}|_{Strats} = \nu Z
. (StatesActions_\Gamma|_{Strats}(\llbracket \phi \rrbracket^F_{po} \cup
\overline{Fair_{[\Gamma]}})) \cap
Pre^{ac}_{\langle\Gamma\rangle}|_{Strats}(Z)$.

Intuitively, $StatesActions_\Gamma|_{strats}(Z)$ returns all the states of
$Z$ with their enabled actions allowed by $strats$ and $Pre^{ac}_{\langle \Gamma
  \rangle}|_{strats}(Z)$ returns the states that can enforce to reach
$Z$ in one step, and the actions that allow them to do so, restricted to actions in $strats$. $\llbracket
\langle \Gamma \rangle \psi\rrbracket^{F,ac}_{fo}|_{strats}$ thus
returns the states satisfying $\langle \Gamma \rangle \psi$ associated
to the actions of $strats$ that allow them to do so.

The correctness of Algorithm~\ref{algo:eval<>improved} is given by the
following theorem.
\begin{theorem}
$Improved\llbracket \langle \Gamma \rangle \psi\rrbracket^F_{po}|_{S\times Act_\Gamma}$ computes the set of states of $M$ satisfying $\langle \Gamma \rangle \psi$, i.e.
\begin{equation*}
\forall s \in S, s \in Improved\llbracket \langle \Gamma \rangle \psi\rrbracket^F_{po}|_{S\times Act_\Gamma} \textrm{ iff } s \models^F_{po} \langle \Gamma \rangle \psi.
\end{equation*}
\end{theorem}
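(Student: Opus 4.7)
The plan is to prove, by induction on the number of conflicting equivalence classes in $Z = \llbracket \langle \Gamma \rangle \psi \rrbracket^{F,ac}_{fo}|_{Strats}$, the following invariant: $Improved\llbracket \langle \Gamma \rangle \psi \rrbracket^F_{po}|_{Strats}$ equals the set of $s \in S$ for which there is a uniform strategy $f_\Gamma$ whose graph $\{\langle s'', f_\Gamma(s'') \rangle : s'' \in S\}$ is contained in $Strats$ and which is winning for $\psi$ from every $s' \sim_\Gamma s$. Applying the invariant with $Strats = S \times Act_\Gamma$ (which contains the graph of every uniform strategy) then yields the theorem, since $s \models^F_{po} \langle \Gamma \rangle \psi$ is exactly the existence of such a uniform $f_\Gamma$.

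Before the induction, I would establish two auxiliary facts. First, that $\llbracket \langle \Gamma \rangle \psi \rrbracket^{F,ac}_{fo}|_{Strats}$ returns exactly the pairs $\langle s, a_\Gamma \rangle \in Strats$ such that $\Gamma$ has a global (not necessarily uniform) strategy contained in $Strats$, starting with $a_\Gamma$ at $s$, that wins $\psi$ from $s$ under fair full-observability semantics. This extends Theorem~\ref{thm:correct-mc-weak} by repeating the same fixpoint induction at the level of state-action pairs, using $Pre^{ac}_{\langle \Gamma \rangle}|_{Strats}$ and $StatesActions_\Gamma|_{Strats}$ in place of their state-only counterparts, while the recursive calls to $\llbracket . \rrbracket^F_{po}$ on sub-formulae are justified by the outer induction on formula structure. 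Second, the recursion terminates: at each call the chosen class $E$ becomes non-conflicting in the restricted $strat$, strictly decreasing the number of conflicting classes.

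For the base case (no conflicts in $Z$), the set $Z$ assigns at most one action per $\sim_\Gamma$-class, hence encodes a uniform partial strategy. Line~\ref{line:eval<>improved:return} returns precisely the states $s$ whose whole class $[s]_{\sim_\Gamma}$ is jointly covered by a common $a_\Gamma$; by the first auxiliary fact each such $s'$ in the class has a global winning strategy contained in $Strats$ starting with $a_\Gamma$, and these global strategies can be fused with the uniform choice $a_\Gamma$ on that class, yielding a uniform winning witness from every $s' \sim_\Gamma s$. Conversely, any uniform winning $f_\Gamma \subseteq Strats$ from $s$ contributes $\langle s', f_\Gamma(s') \rangle$ to $Z$ for every $s' \sim_\Gamma s$, so $s$ appears in the returned set. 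For the inductive step, any uniform strategy with graph in $Strats$ must pick a single action for the class $E$, and that action is one of the $a_\Gamma \in A$; the corresponding $strat$ built at line~\ref{line:eval<>improved:strat} contains its graph, and by induction the recursive call returns exactly those $s$ for which a uniform winner lives in that $strat$. Unioning over $A$ therefore covers exactly the uniform winners in $Strats$.

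The main obstacle will be the first auxiliary fact. One must verify that the action-tracking modifications do not lose or create states: concretely, that for every fixpoint equation of $\llbracket . \rrbracket^F_{fo}|_{Strats}$, projecting the fixpoint of $\llbracket . \rrbracket^{F,ac}_{fo}|_{Strats}$ onto its state component gives the same set, and that the pair-level fixpoints still stabilize over the finite lattice $2^{Strats}$. This requires carefully re-running the $[\Gamma] G$ and $[\Gamma] W$ nested-fixpoint arguments of Theorem~\ref{thm:correct-mc-weak} while threading the $Strats$ restriction through every $Pre^{ac}_{\langle \Gamma \rangle}|_{Strats}$ step and every $StatesActions_\Gamma|_{Strats}$ wrapping, and checking that the restriction to $Strats$ exactly corresponds to restricting the quantifier over global strategies to those whose graphs lie in $Strats$.
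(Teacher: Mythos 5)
Your overall plan (induction on the number of conflicting equivalence classes, an auxiliary characterization of $\llbracket\langle\Gamma\rangle\psi\rrbracket^{F,ac}_{fo}|_{Strats}$, enumeration of the actions of one conflicting class in the inductive step) matches the paper's proof, but your base case has a genuine gap. Your auxiliary fact only gives, for each $s'\sim_\Gamma s$ with $\langle s',a_\Gamma\rangle\in Z$, \emph{some} global winning strategy contained in $Strats$ that starts with $a_\Gamma$ at $s'$. Fusing these per-state strategies ``with the uniform choice $a_\Gamma$ on that class'' does not produce a legitimate witness for $\models^F_{po}$: the strategies may disagree with one another at common states, patching them to play $a_\Gamma$ on the whole class can destroy their winning property, and, most importantly, nothing forces the fused object to be uniform on the \emph{other} equivalence classes it visits, since $Strats$ may still be conflicting even when $Z$ is not. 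The point the paper relies on, and which your argument misses, is that the non-conflicting fixpoint $Z$ is \emph{itself} a (partial) uniform strategy, and the computation with $Pre^{ac}_{\langle\Gamma\rangle}|_{Strats}$ guarantees that $Z$ is winning from every state it covers; i.e.\ the witnessing strategy can be chosen inside $Z$, not merely inside $Strats$, and uniformity then comes for free from non-conflictingness. To repair the proof you must strengthen your auxiliary fact to this closure form (a winning global strategy contained in the returned set of pairs), and then the return line immediately yields a single uniform winner for the whole class.

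A second, related problem: your invariant requires the \emph{entire} graph $\{\langle s'',f_\Gamma(s'')\rangle : s''\in S\}$ of the uniform strategy to lie in $Strats$. This is harmless at the top level ($Strats=S\times Act_\Gamma$), but it is not preserved by the recursive call: $strat$ at line~\ref{line:eval<>improved:strat} only contains pairs that survived filtering, so a uniform winner from $s$ generally does not have its choices at irrelevant (losing or unreachable) states inside $strat$, contradicting your claim that ``the corresponding $strat$ contains its graph.'' The invariant must be phrased for the relevant part of the strategy only (the choices made in the classes actually met along outcomes from the states $s'\sim_\Gamma s$), or directly in terms of strategies contained in $Z$ as above; otherwise the inductive step does not go through as stated.
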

\begin{proof}[Proof sketch]
First, $\llbracket \langle \Gamma \rangle \psi\rrbracket^{F,ac}_{fo}|_{Strats}$ returns the set of states $s$ (and
associated actions) such that there exists a global strategy in
$Strats$ allowing $\Gamma$ to enforce the property in $s$. This means that if a state/action pair is not returned, $\Gamma$ has no global strategy to enforce the property from the given state by using the action given in the pair. By extension, there is no uniform strategy to enforce the property neither. Thus, only state/action pairs returned by $\llbracket \langle \Gamma \rangle \psi\rrbracket^{F,ac}_{fo}|_{Strats}$ have to be considered when searching for a uniform strategy in $Strats$. This also means that $\llbracket \langle \Gamma \rangle \psi\rrbracket^{F,ac}_{fo}|_{Strats}$ filters $Strats$ to winning global strategies; if the result is also a uniform strategy, all the states in the returned set have a uniform strategy to enforce the property.

Second, $Improved\llbracket \langle \Gamma \rangle
\psi\rrbracket^F_{po}|_{Strats}$ returns the set of states satisfying the property in $Strats$.
We can prove this by induction on the number of conflicting
equivalence classes of $Strats$: this is true if there are no conflicting classes because Line~\ref{line:eval<>improved:eval} computes a winning uniform strategy---as discussed above---and Line~\ref{line:eval<>improved:return} returns the set of states for which the strategy is winning for all indistinguishable states. This is also true in the inductive case
because (1) filtering with $\llbracket \langle \Gamma \rangle \psi\rrbracket^{F,ac}_{fo}|_{Strats}$ doesn't lose potential state/action pairs and (2) the algorithm takes one conflicting class and tries all the
possibilities for this class.

The final result thus is correct since it returns the set of
states $s$ for which there is a uniform strategy in $S\times Act_\Gamma$ that is winning for all
states equivalent to $s$.
\end{proof}

\paragraph{Complexity considerations}
Model checking $ATL$ with perfect recall and partial observability is
an undecidable problem~\cite{Schobbens-04}, while model checking
$ATL_{ir}$ is a $\Delta_2^P$-complete problem~\cite{Jamroga-Dix-06}. $ATLK^F_{po}$ subsumes
$ATL_{ir}$ and its model checking problem is therefore
$\Delta_2^P$-hard. Algorithm~\ref{algo:eval<>} performs a call
to $[[.]]^F_{fo}$ for each uniform strategy: $[[.]]^F_{fo}$ is in {\bf
  P}, but in the worst case there could be exponentially many calls to
this procedure, as there could be up to $\prod_{i \in \Gamma}
|Act_i|^{|S_i|}$ uniform strategies to consider.

\section{Conclusion}\label{section:conclusion}

A number of studies in the past have investigated the problem of model
checking strategies under partial observability and, separately, some
work has provided algorithms for including fairness constraints on
\emph{actions} in the case of full observability. To the best of our
knowledge, the issue of fairness constraints and partial observability
have never been addressed together. 

In this paper we presented $ATLK^F_{po}$, a logic combining partial
observability and fairness constraints on \emph{states} (which is the
standard approach for temporal and epistemic logics), and we have
provided a model checking algorithm.%, proving its correctness.
The proposed algorithm is similar to the one of Calta et al.~\cite{Calta-Shkatov-others-10}. They also split possible actions into uniform strategies, but they do not provide a way to deal with fairness constraints.

Finally, the structure of our algorithm is compatible with symbolic model
checking using OBDDs, and we are working on its implementation in the
model checker MCMAS~\cite{LomuscioQuRaimondi09}, where fairness
constraints are only supported for temporal and epistemic operators.

\bibliographystyle{eptcs}
\bibliography{ref}

\end{document}